  \theoremstyle{definition}
  \newtheorem{defn}{\protect\definitionname}
 \theoremstyle{definition}
 \newtheorem*{defn*}{\protect\definitionname}
\theoremstyle{plain}
\newtheorem{thm}{\protect\theoremname}
  \theoremstyle{plain}
  \newtheorem{assumption}{\protect\assumptionname}
  \theoremstyle{plain}
  \newtheorem{lem}{\protect\lemmaname}
  \theoremstyle{plain}
  \newtheorem{cor}{\protect\corollaryname}
  \providecommand{\assumptionname}{Assumption}
  \providecommand{\definitionname}{Definition}
  \providecommand{\lemmaname}{Lemma}
\providecommand{\corollaryname}{Corollary}
\providecommand{\theoremname}{Theorem}
\begin{document}

\title{Closed-form solutions for worst-case law invariant risk measures
with application to robust portfolio optimization}

\author{Jonathan Yu-Meng Li \footnote{Telfer School of Management, University of Ottawa, Ottawa, ON, Canada.\newline Contact: jonathan.li@telfer.uottawa.ca}} 
\maketitle
\begin{abstract}
Worst-case risk measures refer to the calculation of the largest value
for risk measures when only partial information of the underlying
distribution is available. For the popular risk measures such as Value-at-Risk
(VaR) and Conditional Value-at-Risk (CVaR), it is now known that their
worst-case counterparts can be evaluated in closed form when only
the first two moments are known for the underlying distribution. These
results are remarkable since they not only simplify the use of worst-case
risk measures but also provide great insight into the connection between
the worst-case risk measures and existing risk measures. We show in
this paper that somewhat surprisingly similar closed-form solutions
also exist for the general class of law invariant coherent risk measures,
which consists of spectral risk measures as special cases that are
arguably the most important extensions of CVaR. We shed light on the
one-to-one correspondence between a worst-case law invariant risk
measure and a worst-case CVaR (and a worst-case VaR), which enables
one to carry over the development of worst-case VaR in the context
of portfolio optimization to the worst-case law invariant risk measures
immediately. 
\end{abstract}

\section{Introduction}

Measuring how risky a random loss is often requires the knowledge
of its probability distribution. The industry standard measure of
risk, Value-at-Risk (VaR), for example, reports the risk level of
a random loss by calculating an extremal quantile of its distribution.
Another measure of risk, Conditional Value-at-Risk (CVaR), which has
emerged as the most popular alternative to replace VaR as industry
standard, calculates the average loss exceeding an extremal quantile
to indicate the riskiness of a random loss. The problem however of
implementing both of these measures and any other distribution-based
risk measure is that in most practices the exact form of distribution
is often lacking and only sample data is available for estimating
the distribution, which is inevitably prone to sampling error.

This issue has motivated the development of worst-case risk measures
where the goal is to determine the worst-possible risk level over
a set of candidate distributions that captures the uncertainty of
distribution. Worst-Case Value-at-Risk (WCVaR) was first studied by
El Ghaoui et al. (2003) \cite{Laurent-El-Ghaoui:2003aa}, who considered
a set of candidate distributions described by the first two moments,
and showed how the worst-possible VaR value can be calculated for
the set. One of the most noticeable results of El Ghaoui et al. (2003)
\cite{Laurent-El-Ghaoui:2003aa} is perhaps the closed-form solution
for WCVaR. The closed-form expression remarkably resembles the risk
measure of weighted mean-standard deviation, and hence provides useful
insight into how WCVaR can be minimized. El Ghaoui et al. provided
also the formulations of semidefinite programs that are equivalent
to the closed-form expression, which are useful when an additional
layer of uncertainty about the moments needs to be further addressed.
It turns out that a closed-form expression also exists for Worst-Case
Conditional Value-at-Risk (WCCVaR) when the set of candidate distributions
is described by the first two moments (see Chen et al. (2011)\cite{Chen:2011aa},
Natarajan et al. (2010)\cite{Natarajan:2010aa}), and the expression
is identical to the one for WCVaR. Interestingly, this implies that
some of the developments in WCVaR such as dealing with moment uncertainty
in El Ghaoui et al. (2003) can be directly carried over to the case
of WCCVaR. Alternative formulations of worst-case risk measures can
also be found in the literature of distributionally robust optimization
(DRO) (see for example \cite{Calafiore:2007aa,Delage:2010aa,Natarajan:2010aa,Wiesemann:2014aa,Wozabal:2014aa,Zhu:2009aa}).
Most of these works focus on deriving tractable convex or conic programs
for computing the worst-case values (and finding the corresponding
robust solutions).

Our work is motivated by the insight gained from the closed-form solutions
of the WCVaR and WCCVaR. Given the elegancy of the closed form, it
is natural to wonder if the closed-form result is just a consequence
of the relatively simple structure of VaR and CVaR, or it can be found
also for alternative risk measures with more sophisticated structure.
On the top of the list of more sophisticated risk measures is the
class of spectral risk measures that plays an essential role in both
theory and practice. They were first introduced by Acerbi (2002) \cite{Acerbi:smmrcrsra}
who attempted to generalize CVaR (and VaR) so that a more realistic
description of risk-aversion can be made over a spectrum of CVaRs
(VaRs). Later, it became clear that this class of measures is equivalent
to the class of distortion risk measures that have applications in
insurance \cite{Pflug:2006aa,Pichler:2015aa}. It is also known that
spectral risk measures satisfy most, if not all, desirable properties
that have been postulated by the modern risk theory (\cite{Acerbi:smmrcrsra,artzner:coherentRM,follmer02:cvxRiskMeas,kusuoka:licrm}),
namely the property of monotonicity, convexity, translation invariance,
coherency, and law invariance. A more surprising finding however is
that any risk measure that satisfies all these properties, also known
as law invariant coherent risk measures, can be represented through
spectral risk measures (see \cite{kusuoka:licrm,Shapiro:2013aa}).
We study in this paper both the case of Worst-Case Spectral Risk Measure
(WCSRM) and Worst-Case Law-Invariant Coherent Risk Measure (WCLICRM).
Our finding is that despite their seeming complexity, both can be
boiled down to a closed-form expression when only the first two moments
are known for the underlying distribution. The closed-form remarkably
resembles the measure of weighted mean-standard deviation also, which
allows us to shed light on the one-to-one correspondence between any
WCLICRM and WCCVaR (and WCVaR). Based on the observation, we demonstrate
how the result can be extended and applied in the context of robust
portfolio optimization.

This article is organized as follows. In Section 2, we prove the closed-form
result for WCSRM and WCLICRM over a set of univariate distributions
with fixed first two moments. We show in Section 3 how the result
can be applied in the context of robust portfolio optimization.

\section{Analytical Results}

Let $(\Omega,{\cal F},\mathbb{P})$ be a probability space and $Z$
denote a random variable with its distribution $F_{Z}$, i.e. $Z:(\Omega,{\cal F},\mathbb{P})\rightarrow\Re$
and $F_{Z}(t):=\mathbb{P}(Z\leq t)$. The space of random variables
is contained in $L^{2}(\Omega,{\cal F},\mathbb{P})$. We begin by
recalling the following definition of spectral risk measure. 
\begin{defn}
(Spectral risk measure \cite{Acerbi:smmrcrsra}) Given a random variable
$Z$, let $F_{Z}^{-1}$ denote its general inverse cdf function, i.e.
$F_{Z}^{-1}(\alpha):=\inf\{q\;|\;F_{Z}(q)\geq\alpha\}$. The function
\begin{align*}
\rho_{\phi}(Z):= & \int_{0}^{1}\phi(\alpha)F_{Z}^{-1}(\alpha)d\alpha
\end{align*}
is called a spectral risk measure parameterized by $\phi$, if $\phi\in L^{1}[0,1]$
is a non-decreasing probability density function, i.e. $\phi\geq0$
and $\int_{0}^{1}\phi(\alpha)d\alpha=1$. The density function $\phi$
is also called an ``admissible'' risk spectrum. 
\end{defn}
Intuitively, a spectral risk measure may be viewed as a weighted sum
of Value-at-Risk (VaR), where the admissibility of $\phi$ enforces
that the weight assigned to a larger VaR cannot be less. This characterizes
the coherency required for a rational individual who is risk-averse.
The most notable example of spectral risk measure is $(1-\epsilon)-$Conditional
Value-at-Risk ($(1-\epsilon)-$CVaR), where the spectrum $\phi$ takes
the form $\phi(\alpha):=\frac{1}{\epsilon}\mathbf{1}_{[1-\epsilon,1)}(\alpha)$
and $\epsilon\in(0,1]$ stands for a tail probability of $F_{Z}$.
To understand why spectral risk measures play an central role in the
modern theory of risk measures \cite{artzner:coherentRM,follmer02:cvxRiskMeas,kusuoka:licrm},
we shall review the following definition about law invariant coherent
risk measures. 
\begin{defn*}
(Law invariant coherent risk measures) Any risk measure $\rho:L^{2}(\Omega,{\cal F},\mathbb{P})\rightarrow\Re$
that satisfies

1) Monotonicity: $\rho(Z_{1})\leq\rho(Z_{2})$ for any $Z_{1}\leq Z_{2}$
almost surely;

2) Convexity : $\rho((1-\lambda)Z_{1}+\lambda Z_{2})\leq(1-\lambda)\rho(Z_{1})+\lambda\rho(Z_{2}),\;0\leq\lambda\leq1$;

3) Translation invariance : $\rho(Z+c)=\rho(Z)+c$, $c\in\Re$;

4) Positive homogeneity : $\rho(\lambda Z)=\lambda\rho(Z)$, $\lambda\geq0$;

5) Law invariance : $\rho(Z_{1})=\rho(Z_{2})$ if $F_{Z_{1}}\equiv F_{Z_{2}}$.
\\
is said to be a law invariant coherent risk measure. 
\end{defn*}
The importance of the above class of risk measures lies in the fact
that it satisfies all the properties that have been postulated by
the modern theory of convex risk measures \cite{artzner:coherentRM,follmer02:cvxRiskMeas,kusuoka:licrm}
about what a reasonable risk measure should satisfy. Interestingly,
despite its generality, there is a close link between this general
class of risk measures and spectral risk measures, namely that the
former can always be represented through the latter via a supremum
representation. 
\begin{thm}
\label{Any-law-invariant}Any law invariant coherent risk measure
$\rho_{\Phi}:L^{2}(\Omega,{\cal F},\mathbb{P})\rightarrow\Re$ has
the following representation 
\[
\rho_{\Phi}(Z)=\sup_{\phi\in\Phi}\rho_{\phi}(Z),
\]
where $\Phi\subseteq L^{1}[0,1]$ denotes a set of admissible spectrums. 
\end{thm}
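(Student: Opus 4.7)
The plan is to recognize Theorem~\ref{Any-law-invariant} as the classical Kusuoka representation and to reconstruct its standard proof in three stages: a dual representation via coherency, a reduction of the dual set to equimeasurable equivalence classes using law invariance, and finally an identification of each class with an admissible spectrum via the Hardy--Littlewood rearrangement inequality.

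First, I would apply the Fenchel--Moreau theorem (the Artzner--Delbaen--Eber--Heath representation) to obtain
\[
\rho_\Phi(Z) = \sup_{\zeta \in \mathcal{Z}} \mathbb{E}[\zeta Z],
\]
where $\mathcal{Z} \subseteq L^2(\Omega,\mathcal{F},\mathbb{P})$ is a convex, weakly-closed set of nonnegative densities with $\mathbb{E}[\zeta]=1$. Positive homogeneity collapses the generic convex-duality representation to a pure supremum; monotonicity forces $\zeta \geq 0$; translation invariance fixes $\mathbb{E}[\zeta]=1$; and weak lower semicontinuity, which is needed to invoke Fenchel--Moreau, follows from the Lipschitz continuity of $\rho_\Phi$ implied by monotonicity together with translation invariance.

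Second, I would use law invariance to conclude that $\mathcal{Z}$ is invariant under equimeasurable rearrangements: whenever $\zeta \in \mathcal{Z}$ and $\tilde{\zeta} \in L^2$ has the same distribution as $\zeta$, one still has $\tilde{\zeta} \in \mathcal{Z}$. For each $\zeta \in \mathcal{Z}$ let $\phi_\zeta:[0,1]\to[0,\infty)$ denote its non-decreasing quantile function; then $\phi_\zeta$ is automatically nonnegative, non-decreasing, and integrates to one, hence it is an admissible spectrum in the sense of the earlier definition. The Hardy--Littlewood rearrangement inequality yields
\[
\mathbb{E}[\zeta Z] \;\leq\; \int_0^1 \phi_\zeta(\alpha)\, F_Z^{-1}(\alpha)\, d\alpha \;=\; \rho_{\phi_\zeta}(Z),
\]
and by the rearrangement invariance of $\mathcal{Z}$ the supremum over densities with the same distribution as $\zeta$ is attained by some $\zeta^* \in \mathcal{Z}$ comonotone with $Z$, producing equality. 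Taking the supremum over $\zeta \in \mathcal{Z}$ and defining $\Phi := \{\phi_\zeta : \zeta \in \mathcal{Z}\}$ then delivers the claim.

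The main obstacle will be the rearrangement step, which silently requires the probability space $(\Omega,\mathcal{F},\mathbb{P})$ to be atomless, so that a density comonotone with an arbitrary $Z \in L^2$ actually exists within the rearrangement class; on a space with atoms, the identification of $\mathcal{Z}$ with a set of spectra can fail. A secondary technical point is reconciling the natural $L^\infty$ setting of the Artzner--Delbaen--Eber--Heath theorem with the $L^2$ framework adopted here, which means that the weak closedness of $\mathcal{Z}$ and the Lipschitz/lower-semicontinuity properties of $\rho_\Phi$ must be supplied explicitly rather than inherited from norm duality. Once these two subtleties are handled, the rest of the argument is essentially automatic.
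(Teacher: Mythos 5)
Your proposal is essentially a from-scratch reconstruction of the Kusuoka representation, and it is sound in outline, but it takes a genuinely different route from the paper. The paper simply cites the Kusuoka representation in its Average Value-at-Risk mixture form (Proposition 1 of Shapiro 2013), i.e. $\rho(Z)=\sup_{\mu\in{\cal M}}\int_{0}^{1}AV@R_{\gamma}(Z)\,d\mu(\gamma)$, and then does one elementary computation: writing $AV@R_{\gamma}(Z)=(1-\gamma)^{-1}\int_{\gamma}^{1}F_{Z}^{-1}(\alpha)\,d\alpha$ and applying Fubini, each mixture collapses to a spectral risk measure with spectrum $\varphi(\kappa)=\int_{0}^{\kappa}(1-\gamma)^{-1}d\mu(\gamma)$, which is automatically nonnegative, non-decreasing and integrates to one. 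You instead rebuild the cited result itself: the Fenchel--Moreau/ADEH dual representation over a set of densities ${\cal Z}$, rearrangement invariance of the (maximal) dual set from law invariance, and the Hardy--Littlewood inequality with comonotone attainment, so that the spectrum appears directly as the quantile function $\phi_{\zeta}$ of the dual density. What your route buys is self-containedness and an explicit description of $\Phi$ as the quantile functions of the dual densities; what it costs is exactly the two technical burdens you flag: atomlessness of $(\Omega,{\cal F},\mathbb{P})$ (needed both for rearrangement invariance of ${\cal Z}$ and for the existence of a comonotone copy, and in fact implicitly assumed by the paper through its reliance on Shapiro 2013), and continuity/lower semicontinuity of $\rho_{\Phi}$ on $L^{2}$. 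On the latter, note that your stated justification (Lipschitz continuity from monotonicity plus translation invariance) is an $L^{\infty}$-norm argument and does not transfer to the $L^{2}$ norm; the correct tool in the $L^{p}$, $p<\infty$, setting is the extended Namioka--Klee theorem (finite-valued, convex, monotone functionals on a Banach lattice are continuous, hence subdifferentiable), after which your argument goes through. Since you explicitly acknowledge that this continuity must be supplied rather than inherited, I read this as a flagged technical debt rather than a fatal gap; the paper avoids both issues by offloading them to the cited reference.
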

\begin{proof}
It has been discussed in \cite{Shapiro:2013aa} (see Proposition 1)
that any law invariant coherent risk measure $\rho:L^{p}(\Omega,{\cal F},\mathbb{P})\rightarrow\Re$
with $p\in[1,\infty)$ admits the representation of
\[
\rho(Z)=\sup_{\mu\in{\cal M}}\int_{0}^{1}AV@R_{\gamma}(Z)d\mu(\gamma),
\]
where $AV@R_{\gamma}$ stands for $\gamma-$CVaR and ${\cal M}$ denotes
some set of probability measures on $[0,1)$. Since $AV@R_{\gamma}(Z)=(1-\gamma)^{-1}\int_{\gamma}^{1}F_{Z}^{-1}(\alpha)d\alpha$,
we have
\begin{align*}
\text{\ensuremath{\rho}}(Z)= & \sup_{\mu\in{\cal M}}\int_{0}^{1}\int_{\gamma}^{1}(1-\gamma)^{-1}F_{Z}^{-1}(\alpha)d\alpha d\mu(\gamma)\\
= & \sup_{\phi\in\Phi}\int_{0}^{1}\phi(\alpha)F_{Z}^{-1}(\alpha)d\alpha=\sup_{\phi\in\Phi}\rho_{\phi}(Z),
\end{align*}
where $\Phi:=\{\varphi\;|\;\varphi(\kappa)=\int_{0}^{\kappa}(1-\gamma)^{-1}d\mu(\gamma),\;\kappa\in[0,1],\;\,\mu\in{\cal M}\}$
and every $\varphi$ is, by definition, a non-decreasing probability
density on $[0,1]$, i.e. it is an admissible spectrum. 
\end{proof}
As mentioned earlier, for both the case of VaR and CVaR, their worst-case
counterparts can be evaluated in closed form when only the first two
moments are known for the underlying distribution. More specifically,
given a pair of mean and standard deviation $(\mu,\sigma)$, the largest
$(1-\epsilon)-$VaR and $(1-\epsilon)-$CVaR value over the set of
distributions having the mean $\mu$ and standard deviation $\sigma$
can be calculated by \cite{Laurent-El-Ghaoui:2003aa,Chen:2011aa}

\begin{equation}
\rho_{WCVaR}(\mu,\sigma,\epsilon)=\rho_{WCCVaR}(\mu,\sigma,\epsilon)=\mu+\sigma\sqrt{\frac{1-\epsilon}{\epsilon}},\label{eq:wvar}
\end{equation}
where $\epsilon\in(0,1]$ is the tail probability.

Along this line of work, we consider the following optimization problem
that defines the Worst-Case Law Invariant Coherent Risk Measures (WCLICRM):
\begin{eqnarray}
\rho_{WCLICRM}(\mu,\sigma,\Phi):= & \sup_{F_{Z}\in\mathcal{Q}} & \rho_{\Phi}(Z)\label{eq:wclicrm}\\
 & {\rm subject}\;{\rm to} & \mathbb{E}[Z]=\mu\nonumber \\
 &  & \mathbb{STD}[Z]=\sigma,\nonumber 
\end{eqnarray}
where ${\cal Q}$ denotes the set of all probability distributions
on $(-\infty,\infty)$. As a special case of WCLICRM, we define also
the Worst-Case Spectral Risk Measures (WCSRM) when a single spectrum
$\phi$ is considered
\begin{equation}
\rho_{WCSRM}(\mu,\sigma,\phi):=\rho_{WCLICRM}(\mu,\sigma,\{\phi\}).\label{eq:wcsrm}
\end{equation}

Before proceeding further, we shall make the following assumption
about the risk measure $\rho_{\Phi}$ used in defining the problem
$\eqref{eq:wclicrm}$. 
\begin{assumption}
\label{assu:For-any-risk}For any risk measure $\rho_{\Phi}$ employed
in the definition of WCLICRM, the set $\Phi$ consists of spectrums
in $L^{\infty}[0,1]$, i.e. bounded functions only.
\end{assumption}
As noted in \cite{Pichler:2013aa}, unless all considered random variables
are essentially bounded, i.e. $Z\in L^{\infty}(\Omega,{\cal F},\mathbb{P})$,
in general a spectral risk measure $\rho_{\phi}$ with an arbitray
spectrum $\phi\in L^{1}[0,1]$ may not be well defined. It is not
hard to confirm that with Assumption \ref{assu:For-any-risk}, a spectral
risk measure $\rho_{\phi}$ would be finite for any $Z\in L^{2}(\Omega,{\cal F},\mathbb{P})$
(in fact, for any $Z\in L^{1}(\Omega,{\cal F},\mathbb{P}))$. Moreover,
this assumption is not really restrictive since for any general law
invariant coherent risk measure $\rho$, there always exists a set
$\Phi\subseteq L^{\infty}[0,1]$ such that $\rho=\sup_{\phi\in\Phi}\rho_{\phi}$
holds (see Corollary 5, \cite{Pflug:2015aa}).

It is not clear if the problem (\ref{eq:wclicrm}) is tractable in
its full generality or only for special cases like CVaR. The main
result of this section is to show that not only can the above problem
be tractably solved for the cases where $\Phi:=\{\phi\}$, i.e. the
case of WCSRM (\ref{eq:wcsrm}), the solution of (\ref{eq:wclicrm})
in general admits an elegant closed form expression. We present the
result in two steps. Firstly we focus on the case of spectral risk
measures, i.e. (\ref{eq:wcsrm}) and show that in this case the problem
(\ref{eq:wcsrm}) reduces to a closed-form. Thereafter, the result
of WCLICRM, i.e. (\ref{eq:wclicrm}) in general, can be proved fairly
straightforwardly.

Before presenting our main results, we need the following lemma that
facilitates our analysis. 
\begin{lem}
\label{Any-spectral-risk} Given Assumption \ref{assu:For-any-risk},
any spectral risk measure $\rho_{\phi}(Z)$ can be equivalently formulated
as 
\begin{align*}
\rho_{\phi}(Z)= & \min_{q(\alpha)}\mathbb{E}[\phi(0)Z+\int_{0}^{1}[(1-\alpha)q(\alpha)+(Z-q(\alpha))^{+}]d\phi(\alpha)],
\end{align*}
where $q\in L^{1}(0,1)$, and there exists a non-decreasing function
as the optimal solution. 
\end{lem}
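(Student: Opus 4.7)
The plan is to recognize this lemma as a continuous "Kusuoka-style" lifting of the classical Rockafellar--Uryasev variational formula for CVaR, and to carry out the decomposition of $\rho_\phi$ into a mixture of CVaRs before pushing the pointwise minimum outside the integral.

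First, I would use the fact that $\phi$ is non-decreasing and in $L^\infty[0,1]$ to write $\phi(\alpha)=\phi(0)+\int_{[0,\alpha]}d\phi(\gamma)$, where $d\phi$ is the (positive, finite) Lebesgue--Stieltjes measure associated with a right-continuous version of $\phi$. Plugging this into the definition $\rho_\phi(Z)=\int_0^1\phi(\alpha)F_Z^{-1}(\alpha)d\alpha$ and applying Fubini (the integrand is nonnegative up to an affine shift, and $F_Z^{-1}\in L^1$ since $Z\in L^2$), I obtain
\begin{align*}
\rho_\phi(Z)&=\phi(0)\,\mathbb{E}[Z]+\int_0^1\!\!\int_\gamma^1 F_Z^{-1}(\alpha)\,d\alpha\,d\phi(\gamma)\\
&=\phi(0)\,\mathbb{E}[Z]+\int_0^1(1-\gamma)\,AV@R_\gamma(Z)\,d\phi(\gamma),
\end{align*}
where I use the standard identity $AV@R_\gamma(Z)=(1-\gamma)^{-1}\int_\gamma^1F_Z^{-1}(\alpha)d\alpha$ already invoked in the proof of Theorem~\ref{Any-law-invariant}.

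Next I would invoke the Rockafellar--Uryasev formula $AV@R_\gamma(Z)=\min_{q\in\mathbb{R}}\{q+(1-\gamma)^{-1}\mathbb{E}[(Z-q)^+]\}$, pointwise in $\gamma$. Substituting gives
\begin{align*}
\rho_\phi(Z)=\phi(0)\mathbb{E}[Z]+\int_0^1\min_{q\in\mathbb{R}}\bigl\{(1-\gamma)q+\mathbb{E}[(Z-q)^+]\bigr\}\,d\phi(\gamma).
\end{align*}
The content of the lemma is then to interchange this pointwise minimum with the $d\phi$-integration, yielding a minimization over functions $q(\cdot)\in L^1(0,1)$. I would justify this via a measurable selection argument: the integrand $g(\gamma,q):=(1-\gamma)q+\mathbb{E}[(Z-q)^+]$ is jointly measurable, convex and (since $Z\in L^2$) coercive in $q$, so by a standard selection theorem the argmin correspondence admits a measurable single-valued selection. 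The direction "$\le$" (taking any measurable $q(\cdot)$ makes the integrand pointwise $\ge$ the minimum) is immediate, and the reverse direction uses the selection to achieve equality.

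Finally, to exhibit a non-decreasing optimizer, I would recall that the canonical minimizer of the CVaR variational problem at level $\gamma$ is the quantile itself: $q^*(\gamma):=F_Z^{-1}(\gamma)$ attains the infimum of $g(\gamma,\cdot)$. Since $F_Z^{-1}$ is non-decreasing on $(0,1)$ and lies in $L^1(0,1)$ (again because $Z\in L^2$, hence in $L^1$), this furnishes the desired monotone optimal $q(\cdot)$. The main obstacle I anticipate is the interchange of min and integral in the middle step: making the measurable selection and the integrability of the resulting $q(\cdot)$ fully rigorous is where one must be careful, whereas the Fubini decomposition and the identification of the quantile as a monotone selector are standard.
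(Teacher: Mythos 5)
Your argument is correct and follows essentially the same route as the paper: the paper simply cites Acerbi's Proposition 3.2 for the variational representation and then notes that the quantile gives a non-decreasing optimizer before applying Fubini, whereas you re-derive that cited proposition from scratch (decomposing $\phi$ into $\phi(0)$ plus its Lebesgue--Stieltjes measure, writing $\rho_\phi$ as a mixture of $AV@R_\gamma$'s, and invoking the Rockafellar--Uryasev formula) and then make the same quantile/Fubini observations. The explicit quantile selection $q^*(\gamma)=F_Z^{-1}(\gamma)$ already settles the min--integral interchange, so the abstract measurable-selection step you worry about is not actually needed; otherwise the proof is sound.
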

\begin{proof}
Following Proposition 3.2 in \cite{Acerbi:2002aa} we have 
\begin{align*}
\rho(Z) & =\min_{q(\alpha)}\phi(0)\mathbb{E}[Z]+\int_{0}^{1}[(1-\alpha)q(\alpha)+\mathbb{E}[(Z-q(\alpha))^{+}]]d\phi(\alpha),
\end{align*}
and the optimal solution $q^{*}(\alpha)$ satisfies $q^{*}(\alpha)\in[F_{Z}^{-1}(\alpha),F_{Z}^{-1}(\alpha)^{+})$\footnote{$F_{Z}^{-1}(\alpha)^{+}:=\inf\{q\;|\;F_{Z}(q)>\alpha\}$}
over $\alpha\in\text{supp}(\phi)$, i.e. the support of the measure
defined by $\phi(\alpha)$, and can take arbitrary values otherwise.
Hence, one can always construct a non-decreasing function $q^{*}(\alpha)$
over $(0,1)$ that attains the optimality. 

Applying Fubini's Theorem, we arrive at the result. 
\end{proof}
Applying Lemma \ref{Any-spectral-risk}, we can equivalently formulated
the problem of WCSRM (\ref{eq:wcsrm}) as

\begin{align}
\rho_{WCSRM}(\mu,\sigma,\phi)=\sup_{F_{Z}} & \,\min_{q(\alpha)}\int_{-\infty}^{\infty}[\phi(0)z+\int_{0}^{1}[(1-\alpha)q(\alpha)+(z-q(\alpha))^{+}]d\phi(\alpha)]dF_{Z}\label{eq:srmfull}\\
{\rm subject}\;{\rm to} & \int_{-\infty}^{\infty}zdF_{Z}=\mu\nonumber \\
 & \int_{-\infty}^{\infty}z^{2}dF_{Z}=\mu^{2}+\sigma^{2}\nonumber \\
 & \int_{-\infty}^{\infty}dF_{Z}=1,\;F_{Z}\in\mathcal{Q}.\nonumber 
\end{align}

For simplicity, from here on the integral $\int_{-\infty}^{\infty}$
may be written as $\int$ only. As the main result of this paper,
we show in the following theorem that the above problem can be reduced
to the form of a weighted sum of mean and standard deviation. 
\begin{thm}
\label{Any-worst-case-spectral}Given Assumption 1, any worst-case
spectral risk measure (WCSRM) can be evaluated in closed-form: 
\begin{align*}
\rho_{WCSRM}(\mu,\sigma,\phi)= & \mu+\sigma\sqrt{\int_{0}^{1}\phi^{2}(p)dp-1}.
\end{align*}
In the case of $(1-\epsilon)-$CVaR, we have $\int_{0}^{1}\phi^{2}(p)dp=\frac{1}{\epsilon}$. 
\end{thm}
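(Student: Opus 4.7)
The plan is to sidestep the inner min--max from Lemma \ref{Any-spectral-risk} and (\ref{eq:srmfull}) and instead attack the problem directly through the quantile integral definition of $\rho_\phi$, using nothing more than a standardization plus Cauchy--Schwarz. The crucial observation is that the objective depends on $Z$ only through its quantile function, so writing $Z = \mu + \sigma Y$ with $Y$ standardized (mean $0$, variance $1$), one has $F_Z^{-1}(\alpha) = \mu + \sigma F_Y^{-1}(\alpha)$, and since $\int_0^1 \phi(\alpha)\,d\alpha = 1$,
\begin{equation*}
\rho_\phi(Z) \;=\; \mu \;+\; \sigma\int_0^1 \phi(\alpha) F_Y^{-1}(\alpha)\,d\alpha.
\end{equation*}
Thus (\ref{eq:wcsrm}) reduces to the standardized problem of maximizing $\int_0^1 \phi(\alpha) F_Y^{-1}(\alpha)\,d\alpha$ over random variables $Y$ with zero mean and unit variance.

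Next I would apply Cauchy--Schwarz on $L^2[0,1]$. The zero-mean condition says $\int_0^1 F_Y^{-1}(\alpha)\,d\alpha = 0$, so subtracting the constant $1$ from $\phi$ costs nothing and gives
\begin{equation*}
\int_0^1 \phi(\alpha) F_Y^{-1}(\alpha)\,d\alpha \;=\; \int_0^1 (\phi(\alpha)-1) F_Y^{-1}(\alpha)\,d\alpha \;\leq\; \Bigl(\int_0^1 (\phi(\alpha)-1)^2\,d\alpha\Bigr)^{1/2}\Bigl(\int_0^1 F_Y^{-1}(\alpha)^2\,d\alpha\Bigr)^{1/2}.
\end{equation*}
The second factor equals $\sqrt{\mathbb{E}[Y^2]} = 1$, while the first expands (using $\int\phi = 1$) to $\sqrt{\int_0^1 \phi^2(p)\,dp - 1}$. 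This already yields the claimed upper bound on $\rho_{WCSRM}(\mu,\sigma,\phi)$.

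For achievability I would produce an explicit extremal distribution. Setting $c := 1/\sqrt{\int_0^1 \phi^2 - 1}$ (assuming the quantity under the root is strictly positive), let $Y^*$ be the random variable whose quantile function equals $c(\phi(\alpha) - 1)$ a.e.\ on $[0,1]$; such a $Y^*$ exists because, by Assumption \ref{assu:For-any-risk}, $\phi$ is a bounded non-decreasing function, so $c(\phi - 1)$ is a bounded non-decreasing function on $[0,1]$ and thus a valid quantile function of a bounded random variable. One checks immediately that $\mathbb{E}[Y^*] = c\int(\phi - 1) = 0$ and $\mathbb{E}[(Y^*)^2] = c^2\int(\phi-1)^2 = 1$, so $Z^* := \mu + \sigma Y^*$ is feasible for (\ref{eq:wcsrm}) and attains equality in Cauchy--Schwarz. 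The CVaR computation is then a direct plug-in: for $\phi(\alpha) = \epsilon^{-1}\mathbf{1}_{[1-\epsilon,1)}(\alpha)$ one has $\int_0^1 \phi^2 = \epsilon^{-1}$, recovering (\ref{eq:wvar}).

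The main obstacle along this route is really just the equality case: one must verify that $c(\phi - 1)$ can genuinely be realized as the quantile function of an $L^2$ random variable (this is where boundedness of $\phi$ in Assumption \ref{assu:For-any-risk} is used), and handle separately the degenerate case $\int \phi^2 = 1$, which by Cauchy--Schwarz forces $\phi \equiv 1$ a.e.\ and reduces $\rho_\phi$ to the mean, consistent with the formula giving $\mu + 0$. I would avoid the Lagrangian/Scarf-bound route suggested by the primal reformulation (\ref{eq:srmfull}) --- that approach also works, but it requires optimizing out the function $q(\alpha)$ using $\sup_{F_Z}\mathbb{E}[(Z-q)^+] = \tfrac{1}{2}(\mu - q + \sqrt{(\mu-q)^2 + \sigma^2})$ and then integrating in $d\phi$, which is substantially heavier algebra for the same final answer.
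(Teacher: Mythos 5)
Your proposal is correct, and it takes a genuinely different and considerably more elementary route than the paper. The paper's proof goes through Acerbi's variational representation of $\rho_\phi$ (Lemma \ref{Any-spectral-risk}), Sion's minimax theorem to exchange $\sup_{F_Z}$ and $\min_{q(\alpha)}$, conic linear programming duality for the inner moment problem, a second dualization after an SOCP reformulation, and finally a complementary-slackness argument to show that relaxing the monotonicity constraint $q(\alpha)\in Q^{\nearrow}$ is tight. You instead exploit the fact that $\rho_\phi$ depends on $F_Z$ only through the quantile function, reduce to the standardized problem $\sup\{\int_0^1\phi\,F_Y^{-1}\;:\;\int F_Y^{-1}=0,\;\int (F_Y^{-1})^2=1\}$, and settle it by Cauchy--Schwarz against $\phi-1$, with attainment by the explicit quantile function $c(\phi-1)$ (valid because $\phi$ is non-decreasing and, by Assumption \ref{assu:For-any-risk}, bounded, so $Y^*:=c(\phi(U)-1)$ with $U$ uniform is a bounded random variable whose quantile function agrees with $c(\phi-1)$ a.e.). This buys two things the paper's argument does not give directly: a transparent explanation of why the answer is exactly $\mu+\sigma\|\phi-1\|_{L^2}$, and an explicit worst-case distribution, while avoiding any appeal to minimax or strong duality; conversely, the paper's duality machinery produces the intermediate conic reformulations and the dual measure $y(\beta)$ that connect naturally to the DRO literature and to the moment-uncertainty extensions of Section 3. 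The only loose end in your write-up is the trivial boundary case $\sigma=0$ (where the standardization $Z=\mu+\sigma Y$ is degenerate and $\rho_{WCSRM}(\mu,0,\phi)=\mu$ directly), which the paper also dispatches in one sentence; you handle the other degenerate case $\int_0^1\phi^2=1$ correctly.
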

\begin{proof}
Firstly, given that in (\ref{eq:srmfull}) for any fixed $F_{Z}$,
there exists a non-decreasing function as the optimal solution, we
can impose without loss of generality the constraint $q(\alpha)\in Q^{\nearrow}$,
where $Q^{\nearrow}$ denotes the set of all non-decreasing functions
over $(0,1)$. This will facilitate the rest of the proof. Let also
$g(z;q(\alpha)):=\int_{0}^{1}[(1-\alpha)q(\alpha)+(z-q(\alpha))^{+}]d\phi(\alpha)$.

Since the set $Q^{\nearrow}$ is convex, and the objective function
in (\ref{eq:srmfull}) is convex in $q(\alpha)$ for any fixed $F_{Z}$
and linear in $F_{Z}$ for any fixed $q(\alpha)$, we can apply Sion's
minmax theorem \cite{Sion:1958aa} to switch the sup and min and arrive
at the following equivalent problem 
\begin{eqnarray*}
\min_{q(\alpha)\in Q^{\nearrow}} & \{\sup_{F_{Z}\in{\cal Q}} & \int[\phi(0)z+g(z;q(\alpha))]dF_{Z}\;|\;\int zdF_{Z}=\mu,\int z^{2}dF_{Z}=\mu^{2}+\sigma^{2}\}.
\end{eqnarray*}

Applying duality theory of conic linear problems (Shapiro 2001 \cite{Shapiro:2001aa}),
we can replace the inner maximization problem by its dual, which leads
to 
\begin{eqnarray}
\min_{q(\alpha)\in Q^{\nearrow}} & \min_{\lambda_{0},\lambda_{1},\lambda_{2}} & \lambda_{0}+\mu\lambda_{1}+(\mu^{2}+\sigma^{2})\lambda_{2}\label{eq:minmax}\\
 & {\rm subject}\;{\rm to}, & \lambda_{0}+z\lambda_{1}+z^{2}\lambda_{2}\geq\phi(0)z+g(z;q(\alpha)),\;\forall z.\label{eq:con1}
\end{eqnarray}

Strong duality holds when $\sigma>0$. It is easy to verify that when
$\sigma=0$, we have $\rho_{WCSRM}(\mu,0,\phi)=\mu$ and therefore
only the case $\sigma>0$ requires further investigation.

We claim that given any fixed $q(\alpha)\in Q^{\nearrow}$, the function
$g(z;q(\alpha))$ is equivalent to the following function 
\begin{align*}
h(z;q(\alpha)):=\sup_{\beta\in[0,1)}\int_{0}^{1}[(1-\alpha)q(\alpha)+{\bf 1}_{[0,\beta]}(\alpha)(z-q(\alpha))]d\phi(\alpha).
\end{align*}

We verify this by considering the following two cases for any $z$
value: based on the given $q(\alpha)$, either there exists $\alpha\in(0,1)$
such that $q(\alpha)\leq z$ or otherwise. For the first case, let
$\beta(z):=\arg\max_{\{\alpha\in(0,1)\}}\{\alpha\;:\;q(\alpha)\leq z\}$.
Since $q(\alpha)$ is non-decreasing, we have $q(\alpha)\leq z,\forall\alpha\in(0,\beta(z)]$.
Thus, we can equivalently re-write $g(z;q(\alpha))$ as 
\begin{align*}
g(z;q(\alpha))=\int_{0}^{1}[(1-\alpha)q(\alpha)+{\bf 1}_{(0,\beta(z)]}(\alpha)(z-q(\alpha))]d\phi(\alpha).
\end{align*}

By definition, $h(z;q(\alpha))\geq g(z;q(\alpha))$ follows. To show
the other direction, let $\beta^{*}(z)$ denote the optimal solution
for the problem in $h(z;q(\alpha))$. There are two possible cases:
either $\beta^{*}(z)\leq\beta(z)$ or otherwise. If $\beta^{*}(z)\leq\beta(z)$,
we have 
\begin{align*}
h(z;q(\alpha))= & \int_{0}^{1}[(1-\alpha)q(\alpha)+{\bf 1}_{(0,\beta^{*}(z)]}(\alpha)(z-q(\alpha))]d\phi(\alpha)\\
\leq & \int_{0}^{1}[(1-\alpha)q(\alpha)+{\bf 1}_{(0,\beta^{*}(z)]}(\alpha)(z-q(\alpha))+{\bf 1}_{(\beta^{*}(z),\beta(z)]}(\alpha)(z-q(\alpha))]d\phi(\alpha)\\
= & g(z;q(\alpha)),
\end{align*}
whereas for the case $\beta^{*}(z)>\beta(z)$ we have 
\begin{align*}
h(z;q(\alpha))= & \int_{0}^{1}[(1-\alpha)q(\alpha)+{\bf 1}_{(0,\beta(z)]}(\alpha)(z-q(\alpha))+{\bf 1}_{(\beta(z),\beta^{*}(z)]}(\alpha)(z-q(\alpha))]d\phi(\alpha)\\
\leq & g(z;q(\alpha)),
\end{align*}
where the last inequality is due to the definition of $\beta(z)$.

Now, for the case that there exists no $\alpha\in(0,1)$ such that
$q(\alpha)\leq z$, we immediately have 
\[
h(z;q(\alpha))=\int_{0}^{1}(1-\alpha)q(\alpha)d\phi(\alpha)=g(z;q(\alpha)),
\]
where the first equality is due to that the optimal $\beta^{*}(z)$
in $h(z;q(\alpha))$ must be zero, and the second one follows the
definition of $g(z;q(\alpha))$.

Hence, we can replace $g(z;q(\alpha))$ by $h(z,q(\alpha))$ in the
constraint (\ref{eq:con1}), which leads to
\[
\lambda_{0}+z\lambda_{1}+z^{2}\lambda_{2}\geq\phi(0)z+\int_{0}^{\beta}[(1-\alpha)q(\alpha)+(z-q(\alpha))]d\phi(\alpha)+\int_{\beta}^{1}[(1-\alpha)q(\alpha)]d\phi(\alpha),\;\forall z,\forall\beta\in[0,1],
\]
or 
\begin{align*}
\min_{z}\{(\lambda_{0}-\int_{0}^{1}[(1-\alpha)-{\bf 1}_{(0,\beta]}(\alpha)]q(\alpha)d\phi(\alpha))+(\lambda_{1}-(\phi(0)+\int_{0}^{\beta}d\phi(\alpha)))z+\lambda_{2}z^{2}\}\geq0,\;\forall\beta\in[0,1].
\end{align*}

By definition, $\phi(0)+\int_{0}^{\beta}d\phi(\alpha)=\phi(\beta)$.
For any fixed $\beta$, the left-hand-side of the above inequality
is an elementary minimization problem of a univariate quadratic function.
It is bounded below if any only if $\lambda_{2}\geq0$. By replacing
the optimization problem by the formula known for its optima value,
we have the following equivalent formulation: 
\begin{align*}
(\lambda_{0}-\varphi(\beta))-\frac{(\lambda_{1}-\phi(\beta))^{2}}{4\lambda_{2}}\geq0,\;\forall\beta\in[0,1],
\end{align*}
where $\lambda_{2}\geq0$ and $\varphi(\beta):=\int_{0}^{1}[(1-\alpha)-{\bf 1}_{(0,\beta]}(\alpha)]q(\alpha)d\phi(\alpha).$

The optimization problem (\ref{eq:minmax}) with the above reformulated
constraints can be further reformulated into 
\begin{align*}
 & \min_{q(\alpha)\in Q^{\nearrow},\lambda_{1},\lambda_{2}\geq0}\sup_{\beta\in[0,1]}\{\frac{(\lambda_{1}-\phi(\beta))^{2}}{4\lambda_{2}}+\varphi(\beta)+\mu\lambda_{1}+(\mu^{2}+\sigma^{2})\lambda_{2}\}\\
\Rightarrow & \min_{q(\alpha)\in Q^{\nearrow},\lambda_{1},\lambda_{2}\geq0}\sup_{\beta\in[0,1]}\{\frac{\lambda_{1}^{2}}{4\lambda_{2}}-\frac{\lambda_{1}\phi(\beta)}{2\lambda_{2}}+\frac{\phi(\beta)^{2}}{4\lambda_{2}}+\varphi(\beta)+\mu\lambda_{1}+(\mu^{2}+\sigma^{2})\lambda_{2}\}\\
\Rightarrow & \min_{q(\alpha)\in Q^{\nearrow},q,r\geq0}\sup_{\beta\in[0,1]}\{\phi(\beta)^{2}r+\phi(\beta)q+\varphi(\beta)\}+\frac{q^{2}}{4r}+\mu(\frac{-q}{2r})+(\mu^{2}+\sigma^{2})\frac{1}{4r}\\
\Rightarrow & \min_{q(\alpha)\in Q^{\nearrow},q,r\geq0}\sup_{\beta\in[0,1]}\{\phi(\beta)^{2}r+\phi(\beta)q+\varphi(\beta)\}+\frac{(q-\mu)^{2}+\sigma^{2}}{4r}
\end{align*}
where $r=\frac{1}{4\lambda_{2}}$ and $q=\frac{-\lambda_{1}}{2\lambda_{2}}$
is applied in the third line.

By introducing dummy variables $s,t\in\Re$, we have the following
equivalent formulation 
\begin{align}
\min_{q(\alpha)\in Q^{\nearrow},q,r,s,t} & s+\int_{0}^{1}(1-\alpha)q(\alpha)d\phi(\alpha)+t\nonumber \\
 & \int_{0}^{1}{\bf 1}_{(0,\beta]}(\alpha)q(\alpha)d\phi(\alpha)-\phi(\beta)^{2}r-\phi(\beta)q+s\geq0,\forall\beta\in[0,1]\label{eq:con}\\
 & 4rt\geq(q-\mu)^{2}+\sigma^{2}\nonumber \\
 & r\geq0,\nonumber 
\end{align}
where the second constraint can be recast as a second order cone constraint
\begin{equation}
\begin{pmatrix}q-\mu\\
\sigma\\
r-t\\
r+t
\end{pmatrix}\in{\cal Q}_{4},\label{eq:socp}
\end{equation}
where ${\cal Q}_{4}:=\{(u,t)\in\Re^{4}\;|\;||u||\leq t\}$(see, e.g.
\cite{Alizadeh:2003aa}). To further reduce the problem, we relax
first the constraint $q(\alpha)\in Q^{\nearrow}$ and will verify
later that the relaxation is tight. We apply again the theory of conic
linear program \cite{Shapiro:2001aa} and derive the dual of the relaxed
problem.

We can define the dual variable for (\ref{eq:con}) by $y(\beta)\in{\cal Y}[0,1]$,
where ${\cal Y}[0,1]$ denotes the set of right continuous functions
of bounded variation on $[0,1]$ that corresponds to the space of all
finite signed Borel measures on $[0,1]$. The integral over $y(\beta)$
follows Lebesgue-Stieltjes integral. In addition, let $y\in\Re^{4}$
denote the dual variables corresponding to the second order cone constraint
\eqref{eq:socp}. We can write the Lagrange function as follows, where
$x:=(q(\alpha),q,r,s,t)$, 
\begin{align*}
L(x,y(\beta),y)= & \int_{0}^{1}(1-\alpha)q(\alpha)d\phi(\alpha)+s+t-\int_{0}^{1}[\int_{0}^{\beta}q(\alpha)d\phi(\alpha)-\phi(\beta)^{2}r-\phi(\beta)q+s]dy(\beta)\\
 & -y_{0}(q-\mu)-y_{1}(\sigma)-y_{2}(r-t)-y_{3}(r+t)\\
= & \int_{0}^{1}(1-\alpha)q(\alpha)d\phi(\alpha)-\int_{0}^{1}[\int_{\alpha}^{1}dy(\beta)]q(\alpha)d\phi(\alpha)\\
 & +s(1-\int_{0}^{1}dy(\beta))+t(1+y_{2}-y_{3})+r(\int_{0}^{1}\phi(\beta)^{2}dy(\beta)-y_{2}-y_{3})\\
 & +q(\int_{0}^{1}\phi(\beta)dy(\beta)-y_{0})+y_{0}\mu-y_{1}\sigma,
\end{align*}
where in the second line of equality the second term is obtained by
interchanging the order of integration. The dual problem $\max_{y(\beta),y}\min_{x}L(x,y(\beta),y)$
reduces to the following problem 
\begin{align}
\max_{y(\beta),y} & \mu y_{0}-\sigma y_{1}\nonumber \\
\text{subject to} & (1-\alpha)-\int_{\alpha}^{1}dy(\beta)=0,\forall\alpha\in\text{supp}(\phi)\label{eq:d1}\\
 & \int_{0}^{1}dy(\beta)=1\label{eq:d2}\\
 & y\succeq_{+}0\label{eq:d0}\\
 & y_{2}-y_{3}=-1\label{eq:d3}\\
 & y_{2}+y_{3}\leq\int_{0}^{1}\phi(\beta)^{2}dy(\beta)\label{eq:d4}\\
 & y_{0}=\int_{0}^{1}\phi(\beta)dy(\beta)\label{eq:d5}\\
 & \sqrt{y_{0}^{2}+y_{1}^{2}+y_{2}^{2}}\leq y_{3},\label{eq:d7}
\end{align}
where $y\succeq_{+}0$ refers to that $y(\beta)$ is non-decreasing
on $[0,1]$ .

Following Shapiro (2001)\cite{Shapiro:2001aa}, strong duality holds
if there exists a feasible $(q^{*}(\alpha),q^{*},r^{*},s^{*},t^{*})$
such that (generalized) slater condition can be satisfied. This is
the case for (\ref{eq:con}), since given any feasible solution that
does not satisfy the slater condition, we can always find alternative
feasible $s$ and $t$ so that the condition can be satisfied.

Observe that from (\ref{eq:d7}) we have
\begin{align}
 & y_{1}^{2}\leq y_{3}^{2}-y_{2}^{2}-y_{0}^{2}\nonumber \\
\Rightarrow & y_{1}\geq-\sqrt{y_{3}^{2}-y_{2}^{2}-y_{0}^{2}}\nonumber \\
\Rightarrow & y_{1}\geq-\sqrt{(y_{3}-y_{2})(y_{3}+y_{2})-y_{0}^{2}}\nonumber \\
\Rightarrow & -\sigma y_{1}\leq\sigma\sqrt{\int_{0}^{1}\phi(\beta)^{2}dy(\beta)-(\int_{0}^{1}\phi(\beta)dy(\beta))^{2}}\;\;\;\text{(due to}\;\eqref{eq:d3},\eqref{eq:d4},\eqref{eq:d5})\label{eq:las}
\end{align}
Since $y_{1}$ is only constrained by the above inequality, the equality
must hold for the optimal solution. 

Observe also that by applying integration by parts, we can write \eqref{eq:d1}
as
\begin{align}
y(1)-y(\alpha)=1-\alpha & ,\forall\alpha\in\text{supp}(\phi).\label{eq:las3}
\end{align}
Without loss of generality, we can assume $y(\beta)$ is normalized
by $y(1)=1$. Together with \eqref{eq:d5} and \eqref{eq:las}, the
objective function can now be reformulated into
\begin{align}
 & \mu(\int_{0}^{1}\phi(\beta)dy(\beta))+\sigma\sqrt{\int_{0}^{1}\phi(\beta)^{2}dy(\beta)-(\int_{0}^{1}\phi(\beta)dy(\beta))^{2}}\label{eq:obj}\\
= & \mu(\int_{\text{supp}(\phi)}\phi(\beta)dy(\beta)+\int_{[0,1]\setminus\text{supp}(\phi)}\phi(\beta)dy(\beta))+\\
 & \sigma\sqrt{(\int_{\text{supp}(\phi)}\phi(\beta)^{2}dy(\beta)+\int_{[0,1]\setminus\text{supp}(\phi)}\phi(\beta)^{2}dy(\beta))-(\int_{\text{supp}(\phi)}\phi(\beta)dy(\beta)+\int_{[0,1]\setminus\text{supp}(\phi)}\phi(\beta)dy(\beta))^{2}}.\label{eq:las2}
\end{align}

Observe that the integrals $\int_{[0,1]\setminus\text{supp}(\phi)}$
can be carried out independently from the exact shape of $y(\beta)$
over $[0,1]\setminus\text{supp}(\phi)$. Hence, we can always find
an optimal $y(\beta)$ by setting $y(\beta)=\beta$ that satisfies
\eqref{eq:d2}, \eqref{eq:d0}, \eqref{eq:las3}, i.e. it corresponds
to a uniform measure over $[0,1]$. The objective function \eqref{eq:las2}
can thus be reduced to
\begin{align*}
 & \mu(\int_{0}^{1}\phi(\beta)d\beta)+\sigma\sqrt{\int_{0}^{1}\phi(\beta)^{2}d\beta-(\int_{0}^{1}\phi(\beta)d\beta)^{2}}\\
= & \mu+\sigma\sqrt{\int_{0}^{1}\phi(\beta)^{2}d\beta-1,}
\end{align*}
since $\int_{0}^{1}\phi(\beta)d\beta=1$. 

We are left to show that the problem (\ref{eq:con}) remains tight
after relaxing the constraint $q(\alpha)\in Q^{\nearrow}$. By Shapiro
(2001) \cite{Shapiro:2001aa}, given that strong duality holds, we
have the following complementary condition hold for the primal optimal
solution ($q^{*}(\alpha),q^{*},r^{*},s^{*},t^{*})$ and dual optimal
solution ($y^{*}(\beta),y^{*})$
\begin{align*}
 & \int_{0}^{1}[\int_{0}^{\beta}q^{*}(\alpha)d\phi(\alpha)-\phi(\beta)^{2}r^{*}-\phi(\beta)q^{*}+s^{*}]dy^{*}(\beta)=0\\
\Rightarrow & \int_{0}^{\beta}q^{*}(\alpha)d\phi(\alpha)=\phi(\beta)^{2}r^{*}+\phi(\beta)q^{*}-s^{*},\forall\beta\in[0,1]\;\text{(since }y^{*}(\beta)\;\text{is uniform over }[0,1])\\
\Rightarrow & \int_{0}^{\beta}q^{*}(\alpha)d\phi(\alpha)=\int_{0}^{\beta}(2\phi(\alpha)r^{*}+q^{*})d\phi(\alpha)+\phi(0)^{2}r^{*}+\phi(0)q^{*}-s^{*},\forall\beta\in[0,1]\\
\Rightarrow & \int_{0}^{\beta}(q^{*}(\alpha)-2\phi(\alpha)r^{*}-q^{*})d\phi(\alpha)=\phi(0)^{2}r^{*}+\phi(0)q^{*}-s^{*},\forall\beta\in[0,1],\\
\Rightarrow & \int_{\beta_{1}}^{\beta_{2}}(q^{*}(\alpha)-2\phi(\alpha)r^{*}-q^{*})d\phi(\alpha)=0,\forall\beta_{1},\beta_{2}\in[0,1].
\end{align*}
where the second line can also see \cite{Pullan:1996aa}, and the
third is because of applying integration by parts to the right-hand-side.
Hence, for any $\alpha\in\text{supp}(\phi)$, we must have $q^{*}(\alpha)=2\phi(\alpha)r^{*}+q^{*}$,
which is non-decreasing given that $\phi$ is so (note that $r^{*}\geq0)$.
Since for any $\alpha\in[0,1]\setminus\text{supp}(\phi)$, the change
of $q^{*}(\alpha)$ does not make a difference in \eqref{eq:con},
we thus confirmed that there exists also a non-decreasing function
that is optimal in the relaxed problem. This completes the proof. 
\end{proof}
The above result not only provides a unified perspective on generating
WCSRM in closed form for different choice of spectrum $\phi$, i.e.
modifying the scale factor for standard deviation accordingly, it
also enables one to re-interpret the earlier result of WCCVaR. While
the scale factor for standard deviation in WCCVaR has often been expressed
as $\sqrt{\frac{1-\epsilon}{\epsilon}}$, which appears to be the
square root of the ratio between the probability of non-tail part
and tail part, the above result explains that the ratio can also be
interpreted as how much more ``skewed'' the $\phi$ is, i.e. $\int_{0}^{1}\phi^{2}(p)dp$,
compared to the case where $\phi$ is uniform over $[0,1]$, in which
case $\int_{0}^{1}\phi(p)^{2}dp=1$ . We have $\rho_{WCSRM}(\mu,\sigma,\phi^{*})=\mu$
when $\phi^{*}$ is uniform. Hence, the closed-form might be roughly
read as ``the risk neutral value where $\phi$ is uniform plus standard
deviation multiplied by how much more skewed a given $\phi$ is compared
to the case of uniform measure, i.e. $\sqrt{\int\phi^{2}(p)dp-1}$''.

Despite its elegancy, the result of Theorem \ref{Any-worst-case-spectral}
is in fact not obvious and can be quite counter-intuitive if one takes
the perspective from the nominal risk measures, i.e. for a fixed distribution.
To see why the result might be surprising, let us highlight the following
implication from the result. 
\begin{cor}
\label{Cor1:The-worst-case-spectral}Given any $(\mu,\sigma)$, the
worst-case spectral risk measure with spectrum $\phi$, i.e. $\rho_{WCSRM}(\mu,\sigma,\phi)$
is equivalent to the worst-case $(1-\epsilon')-$value-at-risk and
$(1-\epsilon')-$conditional value-at-risk respectively with $\epsilon'=\frac{1}{\int_{0}^{1}\phi(p)^{2}dp}$,
i.e. $\rho_{WCVaR}(\mu,\sigma,\frac{1}{\int_{0}^{1}\phi(p)^{2}dp})$
and $\rho_{WCCVaR}(\mu,\sigma,\frac{1}{\int_{0}^{1}\phi(p)^{2}dp})$
respectively. 
\end{cor}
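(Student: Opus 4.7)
The plan is to identify the scale factors multiplying $\sigma$ in the closed-form expressions and invoke the preceding results directly. By Theorem \ref{Any-worst-case-spectral}, we have
\[
\rho_{WCSRM}(\mu,\sigma,\phi) \;=\; \mu + \sigma\sqrt{\int_{0}^{1}\phi^{2}(p)\,dp - 1},
\]
while by the known formula \eqref{eq:wvar}, for any $\epsilon\in(0,1]$,
\[
\rho_{WCVaR}(\mu,\sigma,\epsilon) \;=\; \rho_{WCCVaR}(\mu,\sigma,\epsilon) \;=\; \mu + \sigma\sqrt{\tfrac{1-\epsilon}{\epsilon}}.
\]
So the proof reduces to finding an $\epsilon'$ for which the two scale factors coincide.

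Setting $\epsilon' := 1/\int_{0}^{1}\phi^{2}(p)\,dp$, I would compute
\[
\sqrt{\tfrac{1-\epsilon'}{\epsilon'}} \;=\; \sqrt{\tfrac{1}{\epsilon'}-1} \;=\; \sqrt{\int_{0}^{1}\phi^{2}(p)\,dp - 1},
\]
which matches the factor in Theorem \ref{Any-worst-case-spectral}; the corollary then follows by direct substitution.

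The only thing that really needs checking is that $\epsilon'$ lies in the admissible interval $(0,1]$, so that the expressions $\rho_{WCVaR}(\mu,\sigma,\epsilon')$ and $\rho_{WCCVaR}(\mu,\sigma,\epsilon')$ are meaningful. Since $\phi$ is an admissible spectrum and hence a probability density on $[0,1]$, the Cauchy--Schwarz inequality gives
\[
\int_{0}^{1}\phi^{2}(p)\,dp \;\geq\; \Bigl(\int_{0}^{1}\phi(p)\,dp\Bigr)^{2} \;=\; 1,
\]
so $\epsilon'\in(0,1]$, with equality exactly when $\phi$ is (a.e.) uniform. This is the only mildly non-algebraic step, and it follows from a standard inequality rather than posing a real obstacle; the remainder of the argument is pure substitution.
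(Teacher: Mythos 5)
Your proof is correct and matches the paper's (implicit) reasoning exactly: the corollary is a direct comparison of the closed form in Theorem \ref{Any-worst-case-spectral} with the formula \eqref{eq:wvar}, solved for $\epsilon'$. Your added Cauchy--Schwarz check that $\int_{0}^{1}\phi^{2}(p)\,dp\geq 1$, so that $\epsilon'\in(0,1]$, is a detail the paper leaves unstated (it only notes the uniform case in the surrounding discussion), and it is a worthwhile inclusion.
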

Obviously, the above statement might not be true for the case of nominal
risk measures, since in order to match the value of a spectral risk
measure, the corresponding tail probability of a $(1-\epsilon')-$CVaR
could depend on the shape of the given distribution, i.e. $\epsilon':=\epsilon(F_{Z})$.
In the above corollary however, the equivalency between WCSRM and
WCCVaR can be established independently from the structure of the
distribution, i.e. the mean and standard deviation.

We are now ready to present the result for the general case of worst-case
law invariant coherent risk measures, which can be straightforwardly
obtained from the result of Theorem \ref{Any-worst-case-spectral}
. 
\begin{thm}
\label{Thm3:Any-worst-case-law}Given Assumption 1, any worst-case
law invariant coherent risk measure defined based on $\rho_{\Phi}=\sup_{\phi\in\Phi}\rho_{\phi}$
can be evaluated in closed-form

\[
\rho_{WCLICRM}(\mu,\sigma,\Phi)=\mu+\sigma\sqrt{\sup_{\phi\in\Phi}\int_{0}^{1}\phi^{2}(p)dp-1},
\]
and is equivalent to the worst-case $(1-\epsilon')-$VaR and $(1-\epsilon')-$CVaR
by setting $\epsilon'=\frac{1}{\sup_{\phi\in\Phi}\int_{0}^{1}\phi^{2}(p)dp}$. 
\end{thm}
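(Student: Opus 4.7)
The plan is to reduce Theorem \ref{Thm3:Any-worst-case-law} to Theorem \ref{Any-law-invariant} combined with Theorem \ref{Any-worst-case-spectral} by interchanging two suprema. Concretely, I would start from the definition
\[
\rho_{WCLICRM}(\mu,\sigma,\Phi) = \sup_{F_Z\in\mathcal{Q}}\rho_\Phi(Z),
\]
subject to the moment constraints $\mathbb{E}[Z]=\mu$ and $\mathbb{STD}[Z]=\sigma$, and then substitute the representation $\rho_\Phi(Z)=\sup_{\phi\in\Phi}\rho_\phi(Z)$ guaranteed by Theorem \ref{Any-law-invariant}. Since the inner supremum over $\phi\in\Phi$ and the outer supremum over $F_Z$ are both suprema of the same two-variable function $(\phi,F_Z)\mapsto \rho_\phi(Z)$, they may be interchanged without any regularity hypothesis, yielding
\[
\rho_{WCLICRM}(\mu,\sigma,\Phi)=\sup_{\phi\in\Phi}\,\sup_{F_Z\in\mathcal{Q}}\rho_\phi(Z)=\sup_{\phi\in\Phi}\rho_{WCSRM}(\mu,\sigma,\phi),
\]
where in the last equality I just recognize the definition \eqref{eq:wcsrm} of WCSRM.

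Next I would plug in the closed-form value of $\rho_{WCSRM}(\mu,\sigma,\phi)$ supplied by Theorem \ref{Any-worst-case-spectral}, which gives
\[
\sup_{\phi\in\Phi}\rho_{WCSRM}(\mu,\sigma,\phi) =\sup_{\phi\in\Phi}\Bigl(\mu+\sigma\sqrt{\int_0^1\phi^2(p)\,dp-1}\Bigr).
\]
Because $\mu$ is constant in $\phi$, $\sigma\geq 0$, and the map $x\mapsto\sqrt{x-1}$ is non-decreasing on $[1,\infty)$ (noting that $\int_0^1\phi^2\,dp\geq 1$ by Cauchy--Schwarz applied to the density $\phi$), the supremum commutes with the outer affine-and-square-root transformation, giving
\[
\rho_{WCLICRM}(\mu,\sigma,\Phi)=\mu+\sigma\sqrt{\sup_{\phi\in\Phi}\int_0^1\phi^2(p)\,dp-1},
\]
which is the claimed closed form.

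For the second assertion, I would simply match the resulting expression against the closed-form \eqref{eq:wvar} for $\rho_{WCVaR}(\mu,\sigma,\epsilon)$ and $\rho_{WCCVaR}(\mu,\sigma,\epsilon)$. Setting $\epsilon'=1/\sup_{\phi\in\Phi}\int_0^1\phi^2(p)\,dp$ makes $\sqrt{(1-\epsilon')/\epsilon'}=\sqrt{\sup_{\phi\in\Phi}\int_0^1\phi^2(p)\,dp-1}$, so both worst-case VaR and worst-case CVaR at level $1-\epsilon'$ coincide with the WCLICRM value.

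There is essentially no hard step here: the whole argument is bookkeeping once Theorems \ref{Any-law-invariant} and \ref{Any-worst-case-spectral} are available, and the only mild point to flag is the interchange of the two suprema, which requires no minimax theorem since both are suprema. If one wished to be extra careful about measurability or about Assumption \ref{assu:For-any-risk} being inherited by each $\phi\in\Phi$, that could be mentioned in a single sentence, but no additional analytic work is needed.
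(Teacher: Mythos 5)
Your proposal is correct and follows essentially the same route as the paper: interchange the two suprema (trivially valid for sup--sup), apply the closed form of Theorem \ref{Any-worst-case-spectral} for each fixed $\phi$, and pull the supremum inside the monotone map $x\mapsto\mu+\sigma\sqrt{x-1}$, with the VaR/CVaR equivalence read off from \eqref{eq:wvar}. Your extra remarks (Cauchy--Schwarz giving $\int_0^1\phi^2(p)\,dp\geq 1$, and the explicit matching of $\epsilon'$) are fine but add nothing beyond the paper's argument.
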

\begin{proof}
For simplicity, we write $F_{Z}\sim(\mu,\sigma)$ to denote any distribution
with mean $\mu$ and standard deviation $\sigma$. We have 
\begin{align*}
\rho_{WCLICRM}(\mu,\sigma,\Phi):= & \sup_{F_{Z}\sim(\mu,\sigma)}\sup_{\phi\in\Phi}\rho_{\phi}(Z)\\
= & \sup_{\phi\in\Phi}\sup_{F_{Z}\sim(\mu,\sigma)}\min_{q(\alpha)}\int[\phi(0)z+\int_{0}^{1}[(1-\alpha)q(\alpha)+(z-q(\alpha))^{+}]d\phi(\alpha)]dF_{Z}\\
= & \sup_{\phi\in\Phi}\mu+\sigma\sqrt{\int_{0}^{1}\phi^{2}(p)dp-1}\\
= & \mu+\sigma\sqrt{\sup_{\phi\in\Phi}\int_{0}^{1}\phi^{2}(p)dp-1},
\end{align*}
where the last equality is simply due to the fact that $\sigma\sqrt{\cdot}$
is an increasing function. 
\end{proof}
We end this section by concluding that the closed-form insight from
WCVaR and WWCVaR can be well carried over to many risk measures that
are considered sensible in the modern risk theory.

\section{Robust Portfolio Optimization}

The observation made in Corollary \ref{Cor1:The-worst-case-spectral}
(or Theorem \ref{Thm3:Any-worst-case-law}) can be found particularly
useful in the context of robust portfolio optimization. We provide
necessary details in this section to draw the connection between robust
portfolio optimization for VaR and general law invariant risk measures.
The problem of robust portfolio optimization seeks a portfolio that
minimizes the worst-case risk while satisfying a number of constraints
such as no short-sale requirement. It can be generally formulated
as the following minmax problem when a law invariant coherent risk
measure is employed 
\begin{eqnarray}
\min_{x\in{\cal X}} & \sup_{F_{R}} & \rho_{\Phi}(-(R)^{\top}x)\label{eq:robustpro}\\
 & {\rm subject\;to} & \mathbb{E}[R]=\mu\nonumber \\
 &  & \mathbb{COV}[R]=\Sigma,\nonumber 
\end{eqnarray}
where ${\cal X}\subset\Re^{n}$ denotes a set of admissible portfolio
allocation vectors over $n$ different assets, and $R:(\Omega,{\cal F},\mathbb{P})\rightarrow\Re^{n}$
stands for the vector of random returns of the $n$ assets with its
distribution $F_{R}$. The set ${\cal X}$ is assumed to be a bounded
polytope that does not contain 0. In the above formulation, we assume
that only the mean $\mu\in\Re^{n}$ and covariance $\Sigma\in\Re^{n\times n}$
of the joint distribution of returns are known, and a portfolio $x\in{\cal X}$
is sought that minimizes the worst-case risk over the set of multivariate
distributions $F_{R}$ having $\mu$ and $\Sigma$ as mean and covariance.
The above problem appears to be difficult due to its minmax form of
objective function and the high dimensionality of the random returns.
Fortunately, we can apply the following result first to simplify the
robust problem. 
\begin{lem}
(\cite{Chen:2011aa}) Let $A:=\{a^{\top}\xi\;|\;\mathbb{E}[\xi]=\mu,\;\mathbb{COV}[\xi]=\Sigma\}$,
$B:=\{\eta\;|\;\mathbb{E}[\eta]=a^{\top}\mu,\;\mathbb{VAR}[\eta]=a^{\top}\Sigma a\}$.
For any $a\neq0\in\Re^{n}$, it holds that $A=B$. 
\end{lem}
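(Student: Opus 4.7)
The plan is to establish the equality $A=B$ by proving both inclusions, with the forward direction being routine and the reverse requiring an explicit construction. For $A\subseteq B$, if $\xi$ has mean $\mu$ and covariance $\Sigma$, then by linearity $\mathbb{E}[a^\top\xi]=a^\top\mu$ and $\mathbb{VAR}[a^\top\xi]=a^\top\Sigma a$, so $a^\top\xi\in B$. The substance of the lemma is the reverse inclusion: given a scalar random variable $\eta$ with $\mathbb{E}[\eta]=a^\top\mu$ and $\mathbb{VAR}[\eta]=a^\top\Sigma a$, we must build a multivariate $\xi$ with the prescribed first two moments such that $a^\top\xi=\eta$ almost surely.

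For this construction, assume first that $a^\top\Sigma a>0$ and set
\[
\xi \;:=\; \mu \;+\; \frac{\eta - a^\top\mu}{a^\top\Sigma a}\,\Sigma a \;+\; \zeta,
\]
where $\zeta$ is a zero-mean random vector, independent of $\eta$, with covariance
\[
C \;:=\; \Sigma - \frac{\Sigma a\, a^\top\Sigma}{a^\top\Sigma a}.
\]
I would first verify that $C$ is a legitimate covariance matrix by writing $C=\Sigma^{1/2}(I-P)\Sigma^{1/2}$, where $P=\Sigma^{1/2}aa^\top\Sigma^{1/2}/(a^\top\Sigma a)$ is an orthogonal projection; hence $C\succeq 0$. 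A Gaussian $\zeta\sim\mathcal{N}(0,C)$ (or any other distribution with this covariance) can then be chosen independent of $\eta$. The key algebraic observation that makes the construction work is $Ca=0$, which implies $a^\top\zeta=0$ almost surely, so $a^\top\xi=a^\top\mu+(\eta-a^\top\mu)+0=\eta$. Direct computation then gives $\mathbb{E}[\xi]=\mu$ and
\[
\mathbb{COV}[\xi] \;=\; \frac{\mathbb{VAR}[\eta]}{(a^\top\Sigma a)^2}\,\Sigma a\, a^\top\Sigma \;+\; C \;=\; \Sigma,
\]
using independence of $\eta$ and $\zeta$ together with $\mathbb{VAR}[\eta]=a^\top\Sigma a$.

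The degenerate case $a^\top\Sigma a = 0$ needs a brief separate argument: since $\Sigma\succeq 0$, this forces $\Sigma a=0$, and any $\eta\in B$ must be almost surely equal to the constant $a^\top\mu$. Then simply pick any $\xi$ with mean $\mu$ and covariance $\Sigma$; one has $a^\top\xi=a^\top\mu=\eta$ almost surely, so $\eta\in A$. The only real subtlety in the whole proof is recognizing that $C$ is PSD and that its null space contains $a$, which is precisely what allows $\zeta$ to restore the missing covariance without disturbing the linear combination $a^\top\xi$; the rest is bookkeeping.
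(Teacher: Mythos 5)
Your proof is correct. Note that the paper itself gives no proof of this lemma --- it is quoted verbatim from Chen et al.\ (2011) (\cite{Chen:2011aa}) and used as a black box --- so there is no in-paper argument to compare against; your write-up is a valid self-contained proof along the standard lines. The inclusion $A\subseteq B$ is indeed immediate, and your reverse construction $\xi=\mu+\frac{\eta-a^{\top}\mu}{a^{\top}\Sigma a}\,\Sigma a+\zeta$ with $\zeta$ independent of $\eta$, zero mean and covariance $C=\Sigma-\frac{\Sigma a a^{\top}\Sigma}{a^{\top}\Sigma a}$ checks out: $C=\Sigma^{1/2}(I-P)\Sigma^{1/2}\succeq0$ with $P$ an orthogonal projection, $Ca=0$ forces $a^{\top}\zeta=0$ almost surely, the cross terms vanish by independence, and the degenerate case $a^{\top}\Sigma a=0$ (where $\Sigma a=0$ and $\eta$ is the constant $a^{\top}\mu$) is handled correctly. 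The only point worth flagging, and it is a technicality rather than a gap: the identity should be read at the level of distributions (as it is used in the paper, where one ranges over all laws $F_{R}$ with the given moments), since realizing $\zeta$ independent of a given $\eta$ on a fixed probability space requires that space to be rich enough; passing to a product space, or simply phrasing the construction distributionally, disposes of this.
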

In other words, we can equivalently reformulated the above problem
as 
\begin{eqnarray*}
\min_{x\in{\cal X}}\sup_{F_{Z}} & \rho_{\Phi}(-Z)\\
\text{subject to} & \mathbb{E}[Z]=\mu^{\top}x\\
 & \mathbb{STD}[Z]=\sqrt{x^{\top}\Sigma x}
\end{eqnarray*}
where for any fixed $x$, the random variable $Z$ is simply a random
variable with a univariate distribution $F_{Z}$. The inner maximization
problem can now be reformulated using the result of Theorem \ref{Thm3:Any-worst-case-law}
and the whole problem can be reduced to the following minimization
problem

\[
\min_{x\in{\cal X}}-\mu^{\top}x+\sqrt{x^{\top}\Sigma x}\sqrt{\sup_{\phi\in\Phi}\int_{0}^{1}\phi(p)^{2}dp-1}.
\]

Provided that the term $\sup_{\phi\in\Phi}\int_{0}^{1}\phi(p)^{2}dp$
can be solved offline, this final problem can be solved easily by
a SOCP solver \cite{Alizadeh:2003aa}. Moreover, it is identical to
the robust portfolio optimization for VaR (\cite{Laurent-El-Ghaoui:2003aa})
except the scale factor, which confirms the following fact aligned
with the observation made in Corollary \ref{Cor1:The-worst-case-spectral}. 
\begin{cor}
Given Assumption 1, solving robust portfolio optimization problem
with law invariant coherent risk measure is equivalent to solving
the robust problem with $(1-\epsilon')-$VaR (or $(1-\epsilon')-$CVaR)
with $\epsilon'=\frac{1}{\sup_{\phi\in\Phi}\int_{0}^{1}\phi^{2}(p)dp}.$ 
\end{cor}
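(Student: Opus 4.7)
The plan is to observe that the preceding reduction already almost writes the proof for us, and all that remains is to match scale factors in two SOCP reformulations whose feasible sets coincide.

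First I would make explicit the reduction that the authors carry out informally right before the corollary: using the Chen et al.\ lemma, the inner supremum over joint distributions $F_R$ with mean $\mu$ and covariance $\Sigma$ collapses, for each fixed portfolio $x\in{\cal X}$, to a supremum over univariate distributions $F_Z$ with mean $\mu^\top x$ and standard deviation $\sqrt{x^\top\Sigma x}$. Applying Theorem \ref{Thm3:Any-worst-case-law} to that inner problem then turns \eqref{eq:robustpro} into
\begin{equation*}
\min_{x\in{\cal X}}\;-\mu^\top x+\sqrt{x^\top\Sigma x}\,\sqrt{\sup_{\phi\in\Phi}\int_0^1\phi^2(p)\,dp-1}.
\end{equation*}

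Next I would run exactly the same reduction for $(1-\epsilon')$-VaR and $(1-\epsilon')$-CVaR. By the same Chen et al.\ lemma the inner supremum again reduces to a univariate worst-case over distributions with mean $\mu^\top x$ and standard deviation $\sqrt{x^\top\Sigma x}$, and formula \eqref{eq:wvar} of El Ghaoui et al.\ (resp.\ Chen et al.) gives the inner value in closed form as $-\mu^\top x+\sqrt{x^\top\Sigma x}\sqrt{(1-\epsilon')/\epsilon'}$. The robust VaR and robust CVaR problems therefore both reduce to
\begin{equation*}
\min_{x\in{\cal X}}\;-\mu^\top x+\sqrt{x^\top\Sigma x}\,\sqrt{\tfrac{1-\epsilon'}{\epsilon'}}.
\end{equation*}

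Finally I would choose $\epsilon'=1/\sup_{\phi\in\Phi}\int_0^1\phi^2(p)\,dp$ and verify by direct algebra that
\begin{equation*}
\sqrt{\tfrac{1-\epsilon'}{\epsilon'}}=\sqrt{\sup_{\phi\in\Phi}\int_0^1\phi^2(p)\,dp-1}.
\end{equation*}
Since both reduced problems are minimizations of the same objective over the same feasible set ${\cal X}$, they share optimal values and optimizers, which is the claimed equivalence. Note that $\epsilon'\in(0,1]$ is guaranteed because $\int_0^1\phi^2(p)\,dp\geq(\int_0^1\phi(p)\,dp)^2=1$ for any admissible spectrum, by Jensen's inequality.

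There is essentially no hard step here: the substantive work was done in Theorem \ref{Any-worst-case-spectral} and Theorem \ref{Thm3:Any-worst-case-law}. The only thing that deserves a line of care is the reduction of the multivariate worst-case to a univariate worst-case, which needs the Chen et al.\ lemma and the bounded-polytope assumption on ${\cal X}$ (in particular $0\notin{\cal X}$, so $x\neq 0$) to guarantee applicability. Once that is in hand, the corollary follows by inspection of the two SOCP reformulations.
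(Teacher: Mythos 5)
Your proposal is correct and follows essentially the same route as the paper: the paper proves this corollary implicitly by the reduction immediately preceding it (Chen et al.'s lemma to collapse to a univariate worst-case, then Theorem \ref{Thm3:Any-worst-case-law} and the El Ghaoui/Chen closed forms), and you simply make the scale-factor matching $\sqrt{(1-\epsilon')/\epsilon'}=\sqrt{\sup_{\phi\in\Phi}\int_0^1\phi^2(p)\,dp-1}$ explicit. Your added remark that Jensen's inequality guarantees $\epsilon'\in(0,1]$ is a sensible touch the paper leaves unstated.
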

The above fact immediately implies that one can easily extend the
above robust portfolio problem (\ref{eq:robustpro}) to the case where
the first two moments are uncertain, which has been well addressed
in the literature of robust VaR and CVaR optimization \cite{Laurent-El-Ghaoui:2003aa,Chen:2011aa,Natarajan:2010aa}\footnote{and robust mean-variance optimization.}.
To demonstrate the idea, we provide below a few possible extensions
to deal with moment uncertainty based on the work of El Ghaoui et
al. (2003) \cite{Laurent-El-Ghaoui:2003aa}, where the resulting formulations
can often be recast as conic programs \cite{NEMIROVSKI:2007aa}. We
skip the proofs since they can be found in El Ghaoui et al. (2003)
once Corollary 2 is applied. 
\begin{cor}
(c.f.\cite{Laurent-El-Ghaoui:2003aa} Section 2.2-2.4, 3.1) Given
Assumption 1, if the mean $\mu$ and covariance $\Sigma$ of the distribution
$F_{R}$ of random returns $R$ are only known to belong to a convex
set ${\cal C}\subseteq\Re^{n}\times\Re^{n\times n}$, the robust portfolio
optimization problem (\ref{eq:robustpro}) with a law invariant coherent
risk measure $\rho_{\Phi}:=\sup_{\phi\in\Phi}\rho_{\phi}$ can be
solved by the following minmax problem 
\begin{align*}
\min_{x\in{\cal X}}\max_{r,\mu,\Sigma} & -r^{\top}x\\
{\rm {\rm subject\;to\;}} & \begin{bmatrix}\Sigma & r-\mu\\
(r-\mu)^{\top} & \sup_{\phi\in\Phi}\int_{0}^{1}\phi(p)^{2}dp-1
\end{bmatrix}\succeq0,\\
 & (\mu,\Sigma)\in{\cal C}
\end{align*}
where $\succeq0$ stands for that the left-hand-side matrix is positive
semi-definite. The problem further reduces to conic programs \cite{NEMIROVSKI:2007aa}
for the following special cases

1) (Polytopic uncertainty) ${\cal C}:={\bf Co}\{(\mu_{k},\Sigma_{k})\}_{k=1,...,K}$,
where ${\bf Co}$ is the convex hull operator

2) (Componentwise bounds) ${\cal C}:=\{(\mu,\Sigma)\;|\;\mu_{L}\leq\mu\leq\mu_{U},\Sigma_{L}\leq\Sigma\leq\Sigma_{U}\}$,

3) (Uncertainty in factor's model) ${\cal C}:=\{(\mu,\Sigma)\;|\;\exists(\mu_{f},S)\;\mu=A\mu_{f},\;\Sigma=D+ASA^{\top},\;\mu_{fL}\leq\mu_{f}\leq\mu_{fU},S_{L}\leq S\leq S_{U}\}$,
where a factor model $R=Af+u$ is assumed for the random returns and
$u$ are residuals with diagonal covariance matrix $D$. 
\end{cor}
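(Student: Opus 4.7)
My plan is to build on the closed-form expression from Theorem~\ref{Thm3:Any-worst-case-law} and then adapt the dualization machinery of El Ghaoui et al.~\cite{Laurent-El-Ghaoui:2003aa} to accommodate moment uncertainty. Abbreviate $\kappa^{2} := \sup_{\phi \in \Phi} \int_{0}^{1} \phi^{2}(p)\,dp - 1$. By the lemma of Chen et al.\ just recalled, the inner sup over multivariate distributions $F_R$ with mean $\mu$ and covariance $\Sigma$ collapses to the univariate worst case for $Z = R^{\top} x$ with $\mathbb{E}[Z] = \mu^{\top} x$ and $\mathbb{STD}[Z] = \sqrt{x^{\top} \Sigma x}$. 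Theorem~\ref{Thm3:Any-worst-case-law} then yields the closed-form inner value $-\mu^{\top} x + \kappa \sqrt{x^{\top} \Sigma x}$, so the robust problem becomes
\begin{equation*}
\min_{x \in {\cal X}} \sup_{(\mu, \Sigma) \in {\cal C}} \bigl\{ -\mu^{\top} x + \kappa \sqrt{x^{\top} \Sigma x} \bigr\}.
\end{equation*}

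Next I would linearize the mean-standard-deviation term by introducing an auxiliary variable $r$ via the ellipsoidal representation
\begin{equation*}
-\mu^{\top} x + \kappa \sqrt{x^{\top} \Sigma x} \;=\; \sup_{r} \bigl\{ -r^{\top} x \;:\; (r - \mu)^{\top} \Sigma^{-1} (r - \mu) \leq \kappa^{2} \bigr\},
\end{equation*}
whose maximizer is $r^{\ast} = \mu - \kappa \Sigma x / \sqrt{x^{\top} \Sigma x}$ whenever $\Sigma$ is positive definite. A Schur-complement step then converts the quadratic constraint into exactly the linear matrix inequality displayed in the corollary. Merging the supremum over $r$ with the outer supremum over $(\mu, \Sigma) \in {\cal C}$ -- both now enter the LMI linearly -- produces the claimed joint minmax SDP.

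For each of the three structured choices of ${\cal C}$, I would then import the corresponding conic reformulations of El Ghaoui et al.\ essentially verbatim, simply replacing their VaR-specific constant $(1-\epsilon)/\epsilon$ by $\kappa^{2}$; this replacement is precisely what Corollary~\ref{Cor1:The-worst-case-spectral} and Theorem~\ref{Thm3:Any-worst-case-law} authorize. In the polytopic case, the LMI is affine in $(\mu, \Sigma)$, so the worst case over $\mathbf{Co}\{(\mu_{k}, \Sigma_{k})\}$ is attained at a vertex and one obtains $K$ parallel LMIs. The componentwise-bound case is handled by Lagrangian dualization of the box constraints on $\mu$ and $\Sigma$ against the LMI, producing an SDP whose size is polynomial in $n$. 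The factor-model case proceeds analogously after substituting $\mu = A \mu_{f}$ and $\Sigma = D + A S A^{\top}$ into the LMI and dualizing the box constraints on $\mu_{f}$ and $S$; the diagonal-plus-low-rank structure of $\Sigma$ keeps the resulting cone tractable.

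The main technical obstacle in a fully rigorous write-up is verifying the strong-duality prerequisites at each step: the ellipsoidal identity above requires $\Sigma \succ 0$, and the Lagrangian dualizations used in the componentwise and factor-model cases require Slater-type conditions on the relevant moment subsets. Under the mild regularity assumptions standard in this literature (e.g., componentwise covariance bounds admitting a strictly positive-definite interior point, or factor structure with nondegenerate residual covariance $D$), these conditions hold, and the remaining arguments mirror those in~\cite{Laurent-El-Ghaoui:2003aa} once $\kappa^{2}$ replaces the VaR-specific constant -- which is precisely why the statement is comfortable invoking that reference rather than reproducing the derivations in full.
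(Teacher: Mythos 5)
Your proposal is correct and follows essentially the route the paper intends: the paper itself skips the proof, noting it follows from El Ghaoui et al.\ (2003) once the equivalence of Corollary \ref{Cor1:The-worst-case-spectral}/Theorem \ref{Thm3:Any-worst-case-law} is applied, and your argument is exactly that reduction (Chen et al.'s lemma, the closed form with $\kappa^{2}=\sup_{\phi\in\Phi}\int_{0}^{1}\phi^{2}(p)\,dp-1$, the ellipsoidal linearization in $r$, and the Schur-complement LMI), with the VaR constant $(1-\epsilon)/\epsilon$ replaced by $\kappa^{2}$. You merely flesh out, with appropriate care about $\Sigma\succ0$ and Slater conditions, the derivation the paper delegates to the cited reference.
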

Finally, it is worth pointing out also that based on the result of
\cite{Chen:2011aa}, the robust portfolio optimization problem (\ref{eq:robustpro})
can also be solved in closed form when the feasible set ${\cal X}$
is described by a simple budget constraint, i.e. $1^{\top}x=1$. Interested
readers are referred to Theorem 2.9 in \cite{Chen:2011aa}.

\section{Concluding Remarks}

In this paper, we showed that closed-form solutions also exist for
a general class of worst-case risk measures defined based on law invariant
coherent risk measures. The result generalizes to a great extent the
existing closed-form result of worst-case Value-at-Risk and worst-case
Conditional Value-at-Risk, which have received a considerable amount
of attention in the past decade. The closed-form solutions for the
general class of measures are remarkably similar to that of VaR and
CVaR, and thus are immediately applicable in many settings where worst-case
VaR and CVaR have been implemented.

 \bibliographystyle{siam}
\bibliography{refs2}

\end{document}